\newtheorem{theorem}{Theorem}
\theoremstyle{definition}
\newcommand{\bb}{\mathbf{b}}
\newcommand{\vv}{\mathbf{v}}
\newcommand{\R}{\mathbb{R}}
\newcommand{\cc}{\mathbf{c}}
\newcommand{\yy}{\mathbf{y}}
\newcommand{\LW}{\mbox{\normalfont LW}}
\newcommand{\OPT}{\mbox{\normalfont OPT}}
\newcommand{\lpoa}{\mbox{\normalfont LPoA}}
\newcommand{\lpos}{\mbox{\normalfont LPoS}}
\newcommand{\calG}{\mathcal{G}}
\newcommand{\eq}{\mbox{eq}}
\newcommand{\GSP}{\mbox{GSP}}
\newcommand{\VCG}{\mbox{VCG}}
\newcommand{\EGFP}{\mbox{EGFP}}
\newcommand{\remove}[1]{}
\begin{document}

\title{\bf\ A note on the efficiency of position mechanisms \\ with budget constraints\thanks{This work has been partially supported by a PhD scholarship from the Onassis Foundation, and by the European Research Council (ERC) under grant number 639945 (ACCORD).}}


\author{Alexandros A. Voudouris \\ University of Oxford \\ alexandros.voudouris@cs.ox.ac.uk} 
\date{}

\maketitle     
       
\begin{abstract}
We study the social efficiency of several well-known mechanisms for the allocation of a set of available (advertising) positions to a set of competing budget-constrained users (advertisers). Specifically, we focus on the Generalized Second Price auction (GSP), the Vickrey-Clarke-Groves mechanism (VCG) and the Expressive Generalized First Price auction (EGFP). Using the liquid welfare as our efficiency benchmark, we prove a tight bound of $2$ on the liquid price of anarchy and stability of these mechanisms for pure Nash equilibria.
\end{abstract}

\section{Introduction}
Position mechanisms have been widely used for the allocation of advertising positions (with different click-through rates) when keywords are queried in search engines. Such mechanisms auction off the available positions to the interested advertisers, who in turn compete with each other by submitting bids, expressing how much they value the available advertising positions (per user click). 

There have been numerous papers analyzing the properties of position mechanisms. Edelman {\em et al.}~\cite{EOS07} (see also \cite{V07}) studied the {\em generalized first price auction} (GFP) as well as the {\em generalized second price auction} (GSP). According to these mechanisms, the advertisers are sorted in terms of the scalar bids that they submit, and each of them pays her own bid or the next highest bid, respectively. The definition of the mechanisms allow the advertisers to strategize over their bids and engage as players into a strategic game. Edelman {\em et al.}~\cite{EOS07} proved that the games induced by GFP are not guaranteed to have pure Nash equilibria, while the games induced by GSP always have socially efficient pure Nash equilibria with respect to the {\em social welfare} benchmark (the total value of the players for the positions they are given); consequently, the {\em price of stability}~\cite{ADK+08} of GSP is equal to $1$. Caragiannis {\em et al.}~\cite{CKK+15} (see also \cite{ST13}) focused on worst-case equilibria and proved several bounds on the {\em price of anarchy}~\cite{KP99} of GSP with respect to a variety of equilibrium concepts, ranging from pure Nash and coarse-correlated equilibria in the full information setting to Bayes-Nash equilibria in the incomplete information setting. D{\"{u}}tting {\em et al.}~\cite{DFP14} proved bounds on the revenue and exploited more expressive input formats as a remedy for the non-existence of pure Nash equilibria in games induced by GFP. They designed the {\em expressive generalized first price auction} (EGFP) according to which each player submits a bid per position, the positions are auctioned off sequentially, and each player pays her bid for the position she is given. 

All of the aforementioned papers studied the {\em no-budget} setting, where the players are assumed to be able to afford any payments, no matter how large these can get. However, in reality, the players have hard budget constraints that upper-bound the payments that they can afford. Following a series of recent papers that focus on such budget-constrained settings, we also study the social efficiency of position mechanisms by bounding the (pure) price of anarchy and stability in terms of the {\em liquid welfare} benchmark that takes budgets into account. Liquid welfare was first introduced by Dobzinski and Paes Leme~\cite{DPL14} who focused on the design of truthful mechanisms for the allocation of multiple units of a single divisible item (see also \cite{LX15,LX17} for extensions of this setting). One of their very first results is the observation that the celebrated VCG mechanism~\cite{V61,C71,G73} is no longer truthful, which is the case for VCG in our setting as well. 

The liquid price of anarchy has been considered in a few related papers so far. Syrgkanis and Tardos~\cite{ST13} considered the liquid welfare benchmark under the term {\em effective welfare} and bounded the ratio between the optimal liquid welfare and the worst-case social welfare at equilibrium, in various strategic auction settings, including position mechanisms. Caragiannis and Voudouris~\cite{CV16} and Christodoulou {\em et al.}~\cite{CST16} were the first to provide constant bounds on the liquid price of anarchy (ratio of optimal liquid welfare over worst-case liquid welfare at equilibrium) of the proportional mechanism for the allocation of divisible resources. These results are based on the now-standard unilateral deviations technique (see also \cite{RST16}) and can be extended to more general equilibrium concepts, given a specific definition of the liquid welfare for randomized allocations. Our upper bounds follow this technique as well, but it seems non-trivial to extend them to more general equilibrium concepts due to the particular form of the deviating bids used. For pure equilibria in particular, by exploiting the structure of worst-case equilibria, Caragiannis and Voudouris~\cite{CV17} were able to characterize the liquid price of anarchy of all mechanisms for the allocation of a single divisible resource, leading to tight bounds. Finally, Azar {\em et al.}~\cite{AFGR15} refined the definition of the liquid welfare for randomized allocations and proved constant liquid price of anarchy bounds over general equilibrium concepts for simultaneous first price auctions. 

In Section~\ref{sec:prelim}, we formally describe the setting considered in this paper and the mechanisms that we are interested in. Then, in Section~\ref{sec:results} we prove our main result: the liquid price of anarchy and stability of GSP, VCG and EGFP is exactly $2$. Consequently, in contrast to the no-budget setting, when we consider players with budget constraints and the liquid welfare benchmark, these mechanisms do not have socially efficient equilibria. Such a phenomenon was first observed by Caragiannis and Voudouris~\cite{CV17} for all single divisible resource allocation mechanisms, and it might be the case that this holds for any position mechanism as well. We conclude with a short discussion of possible extensions of our work in Section~\ref{sec:open}. 

\section{Preliminaries}\label{sec:prelim}

There are $n$ available {\em positions} such that position $j$ has associated click-through-rate (CTR) $\alpha_j \in \R_{>0}$ such that $\alpha_j \geq \alpha_{j+1}$ for $j \in [n-1]$; let $\boldsymbol{\alpha} = (\alpha_j)_{j \in [n]}$ be the vector containing the CTRs of all positions. Furthermore, there are $n$ players that compete over these positions. Player $i$ has a valuation $v_i$ and a total private budget $c_i$; let $\vv = (v_i)_{i \in [n]}$ and $\cc = (c_i)_{i \in [n]}$ be the vectors containing the valuations and budgets of all players. The valuation $v_i$ indicates the value that player $i$ has per click and, therefore, if player $i$ is assigned to some position $j$, then her total value is $\alpha_j v_i$. The budget $c_i$ can be thought of as an upper bound to the payment that the player can afford in order to buy some position.

We consider several greedy mechanisms for the allocation of positions to players, which generally work as follows. Let $M$ be a greedy mechanism. Each player $i$ submits a bid $b_i$ that can either be a real non-negative scalar or a vector of scalars per position, depending on the input format that $M$ requires; let $\bb$ be the vector (or matrix) of bids submitted by all players. Then, the players are sorted in non-increasing order in terms of their bids and the induced ranking $\sigma(\bb)$ indicates the position $\sigma_i(\bb)$ that is assigned to each player $i$; therefore, we call $\sigma(\bb)$ an {\em assignment} that is induced by $\bb$. Also, let $\pi_j(\bb)$ denote the player that is assigned to position $j$ such that $\pi_{\sigma_i(\bb)}(\bb) = i$. The mechanism charges player $i$ an amount of money $p_i(\bb,M)$ that depends on $\bb$, and may or may not depend on $\alpha_{\sigma_i(\bb)}$. Given a bid vector $\bb$, each player $i$ has utility $u_i(\bb,M) = \alpha_{\sigma_i(\bb)}v_i - p_i(\bb,M)$ if $p_i(\bb,M) \leq c_i$, and $-\infty$ otherwise. We focus on three important greedy allocation mechanisms that function as follows.

\subsection*{Generalized Second Price (GSP)}
Each player $i$ submits a scalar $b_i \in \R_{\geq 0}$. The players are sorted in non-increasing order in terms of these bids and are assigned to the corresponding positions. Each player $i$ is charged the next highest bid per click, that is, the bid of player $\pi_{\sigma_i(\bb)+1}(\bb)$ who is assigned to the next position $\sigma_i(\bb)+1$. Hence, the payment of player $i$ is $p_i(\bb,\GSP) = \alpha_{\sigma_i(\bb)}b_{\pi_{\sigma_i(\bb)+1}(\bb)}$, and her utility can be written as $u_i(\bb,\GSP) = \alpha_{\sigma_i(\bb)} \left( v_i - b_{\pi_{\sigma_i(\bb)+1}(\bb)} \right)$. \footnote{Interestingly, D{\'{\i}}az {\em et al.}~\cite{DGKMS16} proved that GSP may not have any equilibria when the number of players exceeds the number of available positions and proposed alternative mechanisms; we here consider the same number of players and positions.}

\subsection*{Vickrey-Clarke-Groves (VCG)} 
Again, each player $i$ submits a scalar $b_i \in \R_{\geq 0}$, and the players are sorted in non-increasing order in terms of their bids. Each player $i$ is charged the difference between the social welfare (based on the bids) of the players ranked below $i$ if $i$ did not participate and their actual social welfare when $i$ participates. In other words, the payment of player $i$ is $p_i(\bb,\VCG) = \sum_{j=\sigma_i(\bb)+1}^n b_{\pi(j)}(\alpha_{j-1}-\alpha_j)$, and her utility can be written as $u_i(\bb,\VCG) = \alpha_{\sigma_i(\bb)} \left( v_i - \frac{1}{\alpha_{\sigma_i(\bb)}}\sum_{j=\sigma_i(\bb)+1}^n b_{\pi(j)}(\alpha_{j-1}-\alpha_j) \right)$.

\subsection*{Expressive Generalized First Price (EGFP)}
Each player $i$ submits a vector $\bb_i \in \R_{\geq 0}^n$ containing a bid per position. The positions are assigned to the players sequentially so that the next available position gets assigned to the player with the maximum bid for it, among the players that have not yet been allocated a position. In other words, let $S_j$ be the set of players that are competing for positions $\ell \geq j$; initially, $S_1$ contains all players. Then, $\pi_j(\bb,\EGFP) = \arg\max_{i \in S_j}{b_{i,j}}$. Each player $i$ is charged (in total) her bid for the position that she is allocated, i.e., $p_i(\bb,\EGFP) = b_{i,\sigma_i(\bb)}$, and her utility is $u_i(\bb,\EGFP) = \alpha_{\sigma_i(\bb)} v_i - b_{i,\sigma_i(\bb)}$.

\subsection*{The game} 
Let $M \in \{\GSP,\VCG, \EGFP\}$ be any of the aforementioned  position mechanisms. Mechanism $M$ induces a strategic position game $\calG(M)$ among the players who act as utility maximizers; this is true even for VCG as we will see in the next section. A bid vector (or matrix) $\bb$ is called a {\em pure Nash equilibrium} (or, simply, equilibrium) for $\calG(M)$ if all players simultaneously maximize their utilities and have no incentive to deviate to any different bid in order to increase their personal utility, i.e., $u_i(\bb,M) \geq u_i((y,\bb_{-i}),M)$, for all players $i$ and bids $y \neq b_i$. Here, the notation $(y,\bb_{-i})$ is used to denote the vector (or matrix) that is obtained by $\bb$ when player $i$ bids $y$ (and all other players bid according to $\bb$). Let $\eq(\calG(M))$ be the set of all equilibria of the position game $\calG(M)$.

\subsection*{Liquid welfare, price of anarchy and price of stability}
We measure the social efficiency of an assignment $\sigma(\bb)$ by the {\em liquid welfare} benchmark, which is defined as the total value of the players, with the value of each player capped by her budget, i.e., 
$$\LW(\sigma(\bb)) = \sum_i \min\{ \alpha_{\sigma_i(\bb)}v_i, c_i \}.$$
The {\em liquid price of anarchy} ({\em liquid price of stability}) of a position game $\calG(M)$ that is induced by a position mechanism $M$ is defined as the ratio between the optimal liquid welfare achieved by any assignment to the minimum (maximum) liquid welfare achieved at any equilibrium assignment. In other words, the liquid price of anarchy and the liquid price of stability of $\calG(M)$ are, respectively, equal to
$$\lpoa(\calG(M)) = \frac{\max_\yy \LW(\sigma(\yy))}{\min_{\bb \in \eq(\calG(M))}\LW(\sigma(\bb))}.$$
and
$$\lpos(\calG(M)) = \frac{\max_\yy \LW(\sigma(\yy))}{\max_{\bb \in \eq(\calG(M))}\LW(\sigma(\bb))}.$$
Then, the liquid price of anarchy and stability of a mechanism $M$ are respectively defined as the worst-case liquid price of anarchy and stability among all position games that are induced by $M$, i.e., $\lpoa(M) = \sup_{\calG(M)}\lpoa(\calG(M))$ and $\lpos(M) = \sup_{\calG(M)}\lpos(\calG(M))$ .

\subsection*{The no-over assumption: no-overbidding and no-overbudgeting}
For the GSP and VCG mechanisms, in order to have meaningful bounds on the liquid price of anarchy, we assume that $\alpha_{\sigma_i(\bb)}b_i \leq \min\{\alpha_{\sigma_i(\bb)}v_i,c_i\}$ for every player $i$. This is a combination of the well-known no-overbidding assumption that demands that $b_i \leq v_i$ and a no-overbudgeting assumption that demands that $\alpha_{\sigma_i(\bb)}b_i \leq c_i$. This assumption is necessary as it is easy (like in the case of the classic price of anarchy literature that deals with the social welfare objective) to construct position games that have arbitrarily bad liquid price of anarchy when the players overbid. For the EGFP mechanism such an assumption is of course not necessary due to the definition of the payment function.  

\section{Bounds on the liquid price of anarchy and stability}\label{sec:results}
We begin with Theorem~\ref{thm:lower-bounds}, where we show that the $\lpoa$ and $\lpos$ of GSP, VCG and EGFP are at least $2$; notice that the example that we present in the following proof also proves that VCG is no longer truthful when the players have budget constraints. Then, in Theorem~\ref{thm:upper-bounds} we prove that this bound of $2$ on the $\lpoa$ and $\lpos$ is tight. 

\begin{theorem}\label{thm:lower-bounds}
The liquid price of anarchy and stability of GSP, VCG (under the no-over assumption) and EGFP are at least $2$.
\end{theorem}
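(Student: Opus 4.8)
The plan is to establish all three lower bounds with a single family of two-player, two-position games, parametrised by a small $\varepsilon>0$. Take positions with click-through rates $\alpha_1=1$ and $\alpha_2=\varepsilon$, a player $1$ with a very large valuation $v_1$ (say $v_1\ge 1/\varepsilon$) and budget $c_1=1$, and a player $2$ with valuation $v_2=1$ and budget $c_2=1$. The first step is to record the two possible assignments. The assignment that places player $2$ in position $1$ and player $1$ in position $2$ has liquid welfare $\min\{\alpha_1 v_2,c_2\}+\min\{\alpha_2 v_1,c_1\}=1+\min\{\varepsilon v_1,1\}=2$, while the reversed assignment (player $1$ in position $1$, player $2$ in position $2$) has liquid welfare $\min\{\alpha_1 v_1,c_1\}+\min\{\alpha_2 v_2,c_2\}=1+\varepsilon$. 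So if I can show that, in each of the three games, every equilibrium induces the reversed assignment, then both $\lpoa$ and $\lpos$ of that game equal $\tfrac{2}{1+\varepsilon}$, and taking $\varepsilon\to 0$ yields $\lpoa(M),\lpos(M)\ge 2$ for $M\in\{\GSP,\VCG,\EGFP\}$.

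The conceptual heart of the argument is that there is no efficient equilibrium: player $1$ can never be content sitting in position $2$. Indeed, whenever player $2$ occupies position $1$, her bid for it is at most $\min\{\alpha_1 v_2,c_2\}=1$ --- this is forced by the no-over assumption for $\GSP$ and $\VCG$, and by budget-feasible (hence non-negative-utility) bidding for $\EGFP$ --- whereas player $1$'s budget $c_1=1$ lets her match (or, after replacing $c_1=1$ by $c_1=1+\delta$ and letting $\delta\to 0$, strictly exceed) that bid, grab position $1$, pay no more than her budget, and obtain utility roughly $v_1-1$, which beats the at most $\varepsilon v_1$ she gets in position $2$ because $v_1$ is large. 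With ties resolved in favour of player $1$ (or via the $\delta$-slack above) this deviation is strict, so no equilibrium induces the efficient assignment; since there are only two assignments and we will exhibit an equilibrium, every equilibrium induces the reversed one.

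It then remains to exhibit one equilibrium inducing the reversed assignment in each mechanism, which I would do by writing down the bids and checking the two unilateral deviations. For $\GSP$ and $\VCG$, let player $1$ bid $1$ and player $2$ bid $1$: player $1$ pays at most $1\le c_1$, is in her best position, and any move down cuts her utility to $\approx\varepsilon v_1<v_1-1$; player $2$ cannot reach position $1$ without violating no-over, and even ignoring that constraint the payment she would incur (the next bid in $\GSP$, the externality $b_1(\alpha_1-\alpha_2)$ in $\VCG$) at least offsets the $\alpha_1 v_2-\alpha_2 v_2=1-\varepsilon$ gain from the larger click-through rate, so she has no profitable deviation. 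For $\EGFP$, let player $1$ submit $(1-\varepsilon,0)$ and player $2$ submit $(1-\varepsilon,0)$ over the two positions: player $1$ will not undercut her winning bid $1-\varepsilon$, since that hands the lucrative position $1$ to player $2$ and drops her to utility $\approx\varepsilon v_1$; and player $2$ will not overbid it, since winning position $1$ at any price above $1-\varepsilon$ gives utility below $\alpha_1 v_2-(1-\varepsilon)=\varepsilon$, which she already has. Combining the two parts gives $\lpoa(M)=\lpos(M)=\tfrac{2}{1+\varepsilon}$ for this game, hence $\ge 2$ in the limit. Finally, I would note that a mild variant of this instance --- giving player $2$ a valuation above $1/(1-\varepsilon)$, so that were player $1$ to bid truthfully $b_1=v_1$ her $\VCG$ payment would exceed $c_1$ and her utility would be $-\infty$ --- shows that truthful bidding is not a dominant strategy under $\VCG$ with budgets.

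The step I expect to be the main obstacle is verifying that the stated bid profiles are genuine Nash equilibria for all three payment rules at once --- in particular the $\EGFP$ case, where player $2$'s bid on position $1$ must be calibrated precisely so that player $1$ has no incentive to lower her bid while player $2$ has no incentive to raise hers --- together with a clean treatment of ties and boundary cases, which I would circumvent by keeping $\varepsilon>0$ (and, if needed, a slack $\delta>0$ in player $1$'s budget) and passing to the limit.
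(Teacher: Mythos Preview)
Your proposal is correct and follows essentially the same route as the paper: a two-player, two-position instance in which the high-valuation budget-capped player inevitably occupies the top slot in every equilibrium, yielding liquid welfare $1+\varepsilon$ against an optimum of $2$; the paper uses the reparametrisation $\alpha_2=1/\lambda$, $v_1=\lambda$, $c_1=1+\varepsilon$ (so a budget slack rather than your tie-breaking convention) but the construction and the two-part argument---exhibiting an inefficient equilibrium and ruling out any efficient one---are the same. Your anticipated obstacle (tie-handling in the $\EGFP$/first-price case) is precisely where the paper also resorts to an arbitrarily small $\delta$ and a limiting argument.
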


\begin{proof}
Let $M \in \{\GSP,\VCG,\EGFP\}$, $\lambda > 2$ and $\varepsilon \in (0,1/2)$. Consider a position game $\calG(M)$ among two players with valuations $\vv = (\lambda,1)$ and budgets $\cc = (1+\varepsilon,1)$, for two positions with CTRs $\boldsymbol{\alpha}= (1,1/\lambda)$. Observe that, for the two possible assignments $(1,2)$ and $(2,1)$, the liquid welfare is
$\LW(1,2) = \min\{ \lambda, 1+\varepsilon\} + \min\left\{ \frac{1}{\lambda}, 1 \right\} = \frac{(1+\varepsilon)\lambda+1}{\lambda}$
and
$\LW(2,1) = \min\{ 1, 1+\varepsilon\} + \min\{ 1, 1 \} = 2.$
Therefore, since $\lambda > 2$ and $\varepsilon < 1/2$, we have that $\LW(2,1) > \LW(1,2)$, and the optimal assignment is $(2,1)$. The ratio 
$$\frac{\LW(2,1)}{\LW(1,2)} = \frac{2\lambda}{(1+\varepsilon)\lambda+1}$$
tends to $2$ as $\lambda$ becomes arbitrarily large and $\varepsilon$ becomes arbitrarily small. In order to prove the theorem, it suffices to show that there exists an equilibrium bid vector that induces the assignment $(1,2)$, while there exists no equilibrium bid vector that induces the assignment $(2,1)$.

\paragraph{GSP.} First, consider the bid vector $\bb = (1+\varepsilon,1)$ which induces the assignment $(1,2)$. The utilities of the two players are $u_1(\bb,\GSP) = \lambda-1$ and $u_2(\bb,\GSP) = \frac{1}{\lambda}$. Player $2$ has no incentive to deviate as, by the no-over assumption, she cannot bid above her budget (which coincides with her value), while any other bid would not change the assignment. Player $1$ obviously has no incentive to deviate to any other bid $b_1' \geq b_2$ as the assignment as well as her payment would not change. So, consider the deviation of player $1$ to the bid $b_1'=b_2 - \gamma$, for some $\gamma>0$. Then, the induced assignment would be $(2,1)$ and player $1$ would have utility $u_1((b_1',b_2),\GSP) = 1 < u_1(\bb,\GSP)$ since $\lambda > 2$. Therefore, $\bb$ is an equilibrium, and the price of anarchy bound follows.
 
Now, assume that there exists an equilibrium bid vector $\bb=(b_1,b_2)$ with $b_1 \leq b_2 \leq 1$ so that the assignment $(2,1)$ is induced, while the no-over assumption is satisfied (for player $2$). The utilities of the two players at this equilibrium are $u_1(\bb,\GSP) = 1$ and $u_2(\bb,\GSP) = 1-b_1$. Consider the deviation of player $1$ to the bid $b_1' = c_1 = 1 + \varepsilon > b_2$. Then, the utility of this player would be $u_1((b_1',b_2),\GSP) = \lambda - b_2 \geq \lambda - 1 > 1$, since $b_2 \leq 1$ and $\lambda > 2$. Hence, player $1$ has incentive to deviate to $b_1'$ and $\bb$ cannot be an equilibrium. The price of stability bound follows.

\paragraph{VCG.} Like in the case of GSP, consider the bid vector $\bb = (1+\epsilon,1)$ which induces the assignment $(1,2)$. The payments of the players are $p_1(\bb,\VCG) = 1 - \frac{1}{\lambda}$ and $p_2(\bb,\VCG) = 0$, yielding utilities of $u_1(\bb,\VCG) = \lambda-1 + \frac{1}{\lambda}$ and $u_2(\bb,\VCG) = \frac{1}{\lambda}$. Obviously, again player $2$ has no incentive to deviate, while player $1$ has no incentive to deviate to any bid $b_1' \geq b_2$. So, consider the deviation of player $1$ to the bid $b_1'=b_2 - \gamma$, for some $\gamma>0$. Then, the induced assignment would be $(2,1)$, the payment of player $1$ would be $p_1((b_1',b_2),\VCG) = 0$ and her utility would be $u_1((b_1',b_2),\VCG) = 1 < \lambda-1 + \frac{1}{\lambda} = u_1(\bb,\VCG)$ since $(\lambda-1)^2 > 0$, for any $\lambda >2$. Therefore, $\bb$ is an equilibrium, and the price of anarchy bound follows.

Now, assume that there exists an equilibrium bid vector $\bb=(b_1,b_2)$ with $b_1 \leq b_2 \leq 1$ so that the assignment $(2,1)$ is induced, while the no-over assumption is satisfied (for player $2$). The payments of the players at this equilibrium are $p_1(\bb,\VCG) = 0$ and $p_2(\bb,\VCG) = b_1\left(1-\frac{1}{\lambda}\right)$, yielding utilities of $u_1(\bb,\VCG) = 1$ and $u_2(\bb,\VCG) = 1-b_1\left(1-\frac{1}{\lambda}\right)$. Consider the deviation of player $1$ to the bid $b_1' = c_1 = 1 + \varepsilon > b_2$. Then, the induced assignment would be $(1,2)$, while the payment and the utility of player $1$ would be $p_1((b_1',b_2),\VCG) = b_2\left(1-\frac{1}{\lambda}\right) \leq 1-\frac{1}{\lambda}$ and $u_1((b_1',b_2),\VCG) = \lambda - p_1(b_1',b_2) \geq \lambda - 1 + \frac{1}{\lambda} > 1$, respectively; the last inequality follows since $(\lambda -1)^2 > 0$, for any $\lambda > 2$. Hence, since player $1$ has incentive to deviate to $b_1'$, $\bb$ cannot be an equilibrium, and the price of stability bound follows.

\paragraph{EGFP.} To show that there exists an equilibrium bid vector $\bb$ that induces the assignment $(1,2)$, consider the bids $\bb_1 = (1+\delta,0)$, where $\delta > 0$ is arbitrarily small, and $\bb_2 = (1,0)$ of the two players for the two available positions, respectively. Observe that after the allocation of the first position, the second one is given without any competition to the only remaining player. Therefore, at equilibrium, no player has any incentive to submit a bid that is greater than zero for the second position. Player $2$ has no incentive to change her bid for the first position since she simply cannot bid any higher, while bidding any lower would not change the assignment. Player $1$ has no incentive to deviate to any other bid $b_{1,1}' \geq b_{2,1}$ as the allocation and her payment would not change, and $\delta$ is assumed to be arbitrarily small. So, consider the deviating bid $b_{1,1}' < b_{2,1}$ which would change the assignment to $(2,1)$ and the utility of player $1$ would be $u_1'(((b_{1,1}',0),\bb_2),\EGFP) = 1 < \lambda - 1 = u_1(\bb,\EGFP)$. Therefore, $\bb$ is indeed an equilibrium, and the price of anarchy bound follows.

For the price of stability bound, assume that there exists an equilibrium bid matrix $\bb$ such that  $b_{1,1} \leq b_{2,1} \leq 1$ so that the allocation $(2,1)$ is induced; again the two players must bid zero for the second position which is, basically, for free. The utilities of the two players at this equilibrium are $u_1(\bb,\EGFP) = 1$ and $u_2(\bb) = 1-b_{2,1}$. Consider the deviation of player $1$ to the bid $b_{1,1}' = 1 + \delta > b_{2,1}$ for the first position, where $\delta > 0$ is arbitrarily small. Then, player $1$ would be allocated the first position and her utility would be $u_1(((b_{1,1}',0),\bb_2),\EGFP) = \lambda - b_{1,1} = \lambda - 1 -\delta > 1$, since $\lambda > 2$ and $\delta$ is arbitrarily small. Hence, player $1$ has no incentive to deviate to $\bb_1'=(b_{1,1}',0)$, $\bb$ cannot be an equilibrium, and the proof is complete. 
\end{proof}
 
The proof of the upper bounds exploits the well-known technique (for proving welfare guarantees in games) of deviating bids. However, it is more complicated since the selected deviating bids must be such that the payments of the players are within their budgets. In fact, this is the main barrier in proving LPoA bounds for more general equilibrium concepts, like Bayes-Nash equilibria in the incomplete information model, where the bids of the players are random variables. 

\begin{theorem}\label{thm:upper-bounds}
The liquid price of anarchy and stability of GSP, VCG (under the no-over assumption) and EGFP are at most $2$.
\end{theorem}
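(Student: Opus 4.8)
The plan is to use the deviating-bid technique, but with deviations carefully engineered so that the payment of the deviating player never exceeds her budget. Fix a mechanism $M\in\{\GSP,\VCG,\EGFP\}$, an arbitrary equilibrium $\bb\in\eq(\calG(M))$, and a liquid-welfare-optimal assignment (such an assignment exists, since every assignment is induced by some bid profile); let $\pi^{*}_j$ denote the player it places at position $j$, and for a player $i$ let $j^{*}(i)$ be her optimal position. Write $\ell_i=\min\{\alpha_{\sigma_i(\bb)}v_i,c_i\}$ for the contribution of $i$ to the liquid welfare at equilibrium and $w_i=\min\{\alpha_{j^{*}(i)}v_i,c_i\}$ for her contribution in the optimum, so that $\LW(\sigma(\bb))=\sum_i\ell_i$ and the optimal liquid welfare equals $\sum_i w_i$. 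The goal is to establish $\sum_i w_i\le 2\sum_i\ell_i$; taking the worst equilibrium then yields $\lpoa(M)\le2$, and since $\lpos\le\lpoa$ always, also $\lpos(M)\le 2$, which together with Theorem~\ref{thm:lower-bounds} gives the value $2$ in both cases. The core of the argument is the per-position inequality
\[
 w_{\pi^{*}_j}\ \le\ \ell_{\pi^{*}_j}+P_j\qquad\text{for every position }j,
\]
where $P_j$ is the equilibrium ``price'' of position $j$: $P_j=\alpha_j b_{\pi_j(\bb)}$ for $\GSP$ and $\VCG$, and $P_j=b_{\pi_j(\bb),j}$ (the equilibrium winning bid for position $j$) for $\EGFP$. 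Summing this over $j$ and using that both $\pi^{*}$ and the equilibrium ordering are permutations gives $\sum_i w_i\le\sum_i\ell_i+\sum_j P_j$, so it remains only to show $\sum_j P_j\le\sum_i\ell_i$.

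That last bound is where the no-over assumption enters. For $\GSP$ and $\VCG$ it is immediate: the assumption gives $\alpha_j b_{\pi_j(\bb)}\le\min\{\alpha_j v_{\pi_j(\bb)},c_{\pi_j(\bb)}\}=\ell_{\pi_j(\bb)}$ for each $j$, and summing over the permutation $\pi(\bb)$ gives $\sum_j P_j\le\sum_i\ell_i$. For $\EGFP$ no assumption is needed: at equilibrium the payment $b_{\pi_j(\bb),j}$ of player $\pi_j(\bb)$ is at most her budget $c_{\pi_j(\bb)}$ (otherwise her utility would be $-\infty$) and at most $\alpha_j v_{\pi_j(\bb)}$ (otherwise she would strictly profit by bidding $0$ on every position, which secures her some position for free and hence nonnegative utility), so again $P_j\le\ell_{\pi_j(\bb)}$; this is exactly why $\EGFP$ does not need a no-over assumption.

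It remains to prove the per-position inequality. Fix $j$ and set $i=\pi^{*}_j$. If $\sigma_i(\bb)\le j$, then since the CTRs are non-increasing we have $\ell_i=\min\{\alpha_{\sigma_i(\bb)}v_i,c_i\}\ge\min\{\alpha_j v_i,c_i\}=w_i$ and the inequality holds as $P_j\ge0$. So assume $\sigma_i(\bb)>j$, and consider the deviation in which player $i$ tries to grab position $j$. For $\GSP$ and $\VCG$, $i$ raises her scalar bid to just above $b_{\pi_j(\bb)}$; since $\sigma_i(\bb)>j$ this inserts $i$ exactly at position $j$ and shifts $\pi_j(\bb),\pi_{j+1}(\bb),\dots$ down one slot (ties broken appropriately). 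Under $\GSP$ her new payment is exactly $\alpha_j b_{\pi_j(\bb)}=P_j$; under $\VCG$ it is $\sum_{\ell=j+1}^{n}(\text{bid of the new occupant of position }\ell)(\alpha_{\ell-1}-\alpha_\ell)\le b_{\pi_j(\bb)}(\alpha_j-\alpha_n)\le P_j$, since every player now occupying a slot below $j$ had equilibrium rank at least $j$ and therefore equilibrium bid at most $b_{\pi_j(\bb)}$. For $\EGFP$, $i$ bids just above $P_j=b_{\pi_j(\bb),j}$ for position $j$ and $0$ for every other position; a short induction on the rounds shows that the allocation of the first $j-1$ positions is unaffected (player $i$ was winning none of them at equilibrium and no other player altered her bids), so when position $j$ is processed $i$ is still in contention and now outbids everyone, winning it at price $P_j$. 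Thus in all three cases the deviation is defined, moves $i$ to position $j$, and has payment at most $P_j$. Now split on budget feasibility. If $P_j>c_i$, then $w_i\le c_i<P_j\le\ell_i+P_j$. If $P_j\le c_i$, the deviation is budget-feasible, so the equilibrium condition applies and gives $\alpha_{\sigma_i(\bb)}v_i\ge u_i(\bb,M)\ge u_i(\text{deviation},M)\ge\alpha_j v_i-P_j$; combining $\alpha_j v_i\le\alpha_{\sigma_i(\bb)}v_i+P_j$ with $w_i\le c_i$ and the elementary inequality $\min\{A+B,C\}\le\min\{A,C\}+B$ for nonnegative $A,B,C$ (applied with $A=\alpha_{\sigma_i(\bb)}v_i$, $B=P_j$, $C=c_i$) yields $w_i\le\ell_i+P_j$, as required.

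The main obstacle is precisely this middle step: for each of the three mechanisms one must exhibit a deviating bid that is budget-aware, that actually relocates player $i$ to her optimal position, and whose payment is controlled by a single quantity $P_j$ satisfying the three properties used above -- it dominates the payment, it exceeds $c_i$ only when $w_i$ is already negligible, and it sums over positions to at most $\LW(\sigma(\bb))$. This is immediate for $\GSP$, needs the telescoping estimate for $\VCG$, and needs the prefix-invariance of the sequential allocation for $\EGFP$. The budget-feasibility case split has no analogue in the classical no-budget analyses, and -- together with the fact that these deviating bids are read off the realized equilibrium bids of the other players -- it is also what stands in the way of extending the bound to more general (e.g.\ Bayes--Nash) equilibrium concepts.
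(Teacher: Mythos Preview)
Your proof is correct and follows essentially the same deviating-bid strategy as the paper: have each player target her optimal position, bound the resulting payment by the equilibrium liquid-welfare contribution of the current occupant via the no-over assumption (or, for EGFP, via individual rationality), and sum. The only organizational difference is the case split---the paper partitions players by whether their equilibrium value is budget-capped, whereas you partition by whether the equilibrium slot already dominates the optimal slot---but both routes yield the same key inequality $w_i\le \ell_i+P_{j^*(i)}$ and the same bound.
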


\begin{proof}
Let $M \in \{\GSP,\VCG,\EGFP\}$ and consider any $n$-player position game $\calG(M)$ induced by $M$. Let $v_i$ and $c_i$ be the value and budget of player $i \in [n]$, and let $\alpha_j$ be the CTR of position $j \in [n]$. Let $\bb$ be an equilibrium bid vector that induces an assignment $\sigma(\bb)$; recall that $\pi_j(\bb)$ denotes the player that is assigned to position $j$. Moreover, let $o_i$ denote the position given to player $i$ at an optimal allocation, and $\OPT = \sum_i \min\{\alpha_{o_i}v_i,c_i\}$.

Now, consider the following partition of the players: $A = \{i: \alpha_{\sigma_i(\bb)}v_i \leq c_i\}$. Then, for every player $i \not\in A$, we have that $\min\{\alpha_{\sigma_i(\bb)}v_i,c_i\} = c_i \geq \min\{\alpha_{o_i}v_i,c_i\}$, and by summing over all such players, we obtain
\begin{align}\label{eq:sum-not-A}
\sum_{i \not\in A} \min\{\alpha_{\sigma_i(\bb)}v_i,c_i\} \geq \sum_{i \not\in A} \min\{\alpha_{o_i}v_i,c_i\}.
\end{align}
The rest of the proof is dedicated to showing that, for any player $i\in A$ and some $\gamma > 0$, it holds that 
\begin{align}\label{eq:to-be-proved}
u_i(\bb)\geq \min\{\alpha_{o_i}v_i,c_i\} -  \min\{\alpha_{o_i}v_{\pi_{o_i}(\bb)},c_{\pi_{o_i}(\bb)}\} - \gamma.
\end{align}
Then, since $\min\{\alpha_{\sigma_i(\bb)}v_i,c_i\} = \alpha_{\sigma_i(\bb)}v_i \geq u_i(\bb)$, by summing over all players $i \in A$, and by the fact that $|A| \leq n$, we obtain
\begin{align}\label{eq:sum-A}\nonumber
\sum_{i \in A} \min\{\alpha_{\sigma_i(\bb)}v_i,c_i\} &\geq \sum_{i \in A} \min\{\alpha_{o_i}v_i,c_i\} - \sum_{i \in A} \min\{\alpha_{o_i}v_{\pi_{o_i}(\bb)},c_{\pi_{o_i}(\bb)}\} - \gamma |A| \\
&\geq \sum_{i \in A} \min\{\alpha_{o_i}v_i,c_i\} - \LW(\sigma(\bb)) - \gamma n.
\end{align}
Hence, the theorem will follow by combining inequalities (\ref{eq:sum-not-A}) and (\ref{eq:sum-A}), and by choosing $\gamma$ to be arbitrarily small, since we have that
\begin{align*}
\LW(\sigma(\bb)) &= \sum_{i \not\in A} \min\{\alpha_{\sigma_i(\bb)}v_i,c_i\} + \sum_{i \in A} \min\{\alpha_{\sigma_i(\bb)}v_i,c_i\} \\
&\geq \sum_{i \not\in A} \min\{\alpha_{o_i}v_i,c_i\} +\sum_{i \in A} \min\{\alpha_{o_i}v_i,c_i\} - \LW(\sigma(\bb)) - \gamma n \\
&\geq \OPT - \LW(\sigma(\bb)) - \gamma n.
\end{align*}
We now distinguish between cases depending on which mechanism is used. In the following, since the mechanism under consideration is clear from context, we drop it from our notation.

\paragraph{GSP.} For any player $i \in A$ consider the deviating bid $b_{\pi_{o_i}(\bb)}+\tilde{\gamma}$, where $b_{\pi_{o_i}(\bb)}$ is the bid of the player that is given position $o_i$ at equilibrium and $\tilde{\gamma} = \frac{\gamma}{\alpha_{o_i}}$ is such that $b_{\pi_{o_i}(\bb)} \leq b_{\pi_{o_i}(\bb)}+\tilde{\gamma} \leq b_{\pi_{o_i - 1}(\bb)}$; notice that player $i$ can choose such a $\tilde{\gamma}$ as she has full information about the bids of the other players, and there exists a tie-breaking assigning the position $o_i$ to player $i$ after the deviation (in case of equality). With this deviating bid, player $i$ essentially plays only for her optimal position $o_i$, if she can afford to do so. 

If the deviating bid $b_{\pi_{o_i}(\bb)}+\tilde{\gamma}$ satisfies the no-over assumption, then player $i$ is guaranteed to be given position $o_i$ in the new allocation and pay $b_{\pi_{o_i}(\bb)}$ per click. By the equilibrium condition, the fact that $\gamma > 0$, and since $\alpha_{o_i}b_{\pi_{o_i}(\bb)} \leq \min\{\alpha_{o_i}v_{\pi_{o_i}(\bb)},c_{\pi_{o_i}(\bb)}\}$ (by the no-over assumption for player $\pi_{o_i}(\bb)$), we have that 
\begin{align*}
u_i(\bb) &\geq u_i(b_{\pi_{o_i}(\bb)}+\delta,\bb_{-i}) = \alpha_{o_i}(v_i - b_{\pi_{o_i}(\bb)})
\geq  \min\{\alpha_{o_i}v_i,c_i\} -  \min\{\alpha_{o_i}v_{\pi_{o_i}(\bb)},c_{\pi_{o_i}(\bb)}\} - \gamma.
\end{align*}

If the deviating bid does not satisfy the no-over assumption, then we have that $b_{\pi_{o_i}(\bb)}+\tilde{\gamma} > v_i$ or $\alpha_{o_i}(b_{\pi_{o_i}(\bb)}+\tilde{\gamma}) > c_i$. Due to the no-over assumption for player $\pi_{o_i}(\bb)$, both of these inequalities imply that $\min\{\alpha_{o_i}v_{\pi_{o_i}(\bb)},c_{\pi_{o_i}(\bb)}\} + \gamma > \min\{\alpha_{o_i}v_i,c_i\}$. Since player $i$ has non-negative utility at equilibrium, we conclude that 
\begin{align*}
u_i(\bb) \geq 0 > \min\{\alpha_{o_i}v_i,c_i\} - \min\{\alpha_{o_i}v_{\pi_{o_i}(\bb)},c_{\pi_{o_i}(\bb)}\} - \gamma,
\end{align*}
as desired.

\paragraph{VCG.} The proof is similar to that for GSP. The main difference here is that when the deviating bid $b_{\pi_{o_i}(\bb)}+\tilde{\gamma}$ of player $i \in A$ satisfies the no-over assumption, then player $i$ is again guaranteed to be given position $o_i$, but now has to pay $\sum_{j = o_i + 1}^n b_{\pi_j(\bb)}(\alpha_{j-1}-\alpha_j)$ in total. Observe that, since VCG is a greedy mechanism, at equilibrium we have that $b_{\pi_{o_i}(\bb)} \geq b_{\pi_j(\bb)}$ for every $j \in \{o_i+1, ..., n\}$. This implies that
\begin{align*}
\sum_{j = o_i + 1}^n b_{\pi_j(\bb)}(\alpha_{j-1}-\alpha_j) &\leq b_{\pi_{o_i}(\bb)} \sum_{j = o_i+1 }^{n}(\alpha_{j-1}-\alpha_j) 
=  b_{\pi_{o_i}(\bb)} (\alpha_{o_i}-\alpha_n) 
\leq \alpha_{o_i} b_{\pi_{o_i}(\bb)}.
\end{align*}
Using this, we can follow the proof template for GSP and show the desired inequality.

\paragraph{EGFP.} For any player $i\in A$ consider the deviating bid vector $\yy$ so that $y_{o_i} = b_{\pi_{o_i}(\bb),o_i}+\gamma$ and $y_j = 0$ for any other position $j \neq o_i$. Again, player $i$ plays only for her optimal position $o_i$, if she can afford to do so. If $y_{o_i} > c_i$, then since the utility of player $i$ is non-negative at equilibrium, we obtain
\begin{align*}
u_i(\bb) &\geq 0 > c_i - b_{\pi_{o_i}(\bb),o_i} - \gamma 
\geq  \min\{\alpha_{o_i}v_i,c_i\} -  \min\{\alpha_{o_i} v_{\pi_{o_i}(\bb)}, c_{\pi_{o_i}(\bb)} \} - \gamma,
\end{align*}
where the last inequality follows by the fact that player $\pi_{o_i}(\bb)$ has non-negative utility at equilibrium and her payment is within her budget, which imply that $b_{\pi_{o_i}(\bb),o_i} \leq \min\{\alpha_{o_i} v_{\pi_{o_i}(\bb)}, c_{\pi_{o_i}(\bb)} \}$.

Otherwise, the deviating bid is such that player $i$ is allocated position $o_i$ and her payment $y_{o_i}$ is within her budget. Therefore, by the equilibrium condition, and by the fact that $b_{\pi_{o_i}(\bb),o_i} \leq \min\{\alpha_{o_i} v_{\pi_{o_i}(\bb)}, c_{\pi_{o_i}(\bb)} \}$, we have that
\begin{align*}
u_i(\bb) &\geq u_i(\yy,\bb_{-i}) 
\geq \alpha_{o_i}v_i - b_{\pi_{o_i}(\bb),o_i} - \gamma 
\geq  \min\{\alpha_{o_i}v_i,c_i\} -  \min\{\alpha_{o_i}v_{\pi_{o_i}(\bb)},c_{\pi_{o_i}(\bb)}\} - \gamma
\end{align*}
and inequality (\ref{eq:to-be-proved}) follows. 
\end{proof}

\section{Possible extensions}\label{sec:open}
In this letter, we studied the efficiency of several well-known mechanisms for the allocation of (advertising) positions to strategic budget-constrained users, and proved that their liquid price of anarchy and stability for pure equilibria is exactly $2$. Of course, there are multiple interesting open questions that one could attempt to answer here, like exploring {\em all} position mechanisms and bounding their liquid price of anarchy and stability. In particular, is there any position mechanism with liquid price of anarchy strictly smaller than $2$, even for the fundamental case of two players? 

Another important direction for future research is to consider more general settings, with incomplete information where both the values and the budgets of the players are randomly drawn from some prior distribution, and bound the liquid price of anarchy of position mechanisms for more general equilibrium notions, like coarse-correlated and Bayes-Nash equilibria. Finally, it might be interesting to study scenarios where the budgets of the players are assumed to be common knowledge (or they can be inferred in some way), and design mechanisms with improved social efficiency guarantees, by exploiting this information.

\vskip 0.2in
\bibliographystyle{plain}
\bibliography{gsp}

\end{document}